\begin{document}

\newtheorem{theorem}{Theorem}[section]
\newtheorem{lemma}{Lemma}[section]
\newtheorem{corollary}{Corollary}[section]
\newtheorem{claim}{Claim}[section]
\newtheorem{proposition}{Proposition}[section]
\newtheorem{definition}{Definition}[section]
\newtheorem{fact}{Fact}[section]
\newtheorem{example}{Example}[section]

\newcommand{\cA}{{\cal A}}
\newcommand{\cB}{{\cal B}}
\newcommand{\cC}{{\cal C}}
\newcommand{\cG}{{\cal G}}
\newcommand{\cN}{{\cal N}}
\newcommand{\cU}{{\cal U}}
\newcommand{\cT}{{\cal T}}
\newcommand{\cS}{{\cal S}}
\newcommand{\cL}{{\cal L}}
\newcommand{\cV}{{\cal V}}
\newcommand{\loc}{{\cal LOCAL}}

\newcommand{\ai}{\alpha_i}
\newcommand{\bi}{\beta_i}
\newcommand{\gi}{\gamma_i}
\newcommand{\di}{\delta_i}

\newcommand{\oai}{\overline{\alpha}_i}
\newcommand{\obi}{\overline{\beta}_i}
\newcommand{\ogi}{\overline{\gamma}_i}
\newcommand{\odi}{\overline{\delta}_i}

\newcommand{\qed}{\hfill $\square$ \smallbreak}
\newenvironment{proof}{\noindent{\bf Proof:}}{\qed}

\newcommand{\procend}{\hfill $\diamond$\medskip}

\newcommand{\oddRepair}{{\tt Odd\-Repair}}
\newcommand{\Deactivate}{{\tt Deactivate}}
\newcommand{\evenARepair}{{\tt Even\-Al\-most\-Re\-pair}}
\newcommand{\ringThree}{{\tt Ring\-Three\-Co\-lo\-ring}}
\newcommand{\ringLearning}{{\tt Ring\-Lear\-ning}}
\newcommand{\Elect}{{\tt Elect}}



\def\thefootnote{\fnsymbol{footnote}}

\title{{\bf Using Time to Break Symmetry:\\
 Universal Deterministic Anonymous Rendezvous}}

\author{Andrzej Pelc\footnotemark[1]
\and
Ram Narayan Yadav\footnotemark[2]
}

\footnotetext[1]{
 D\'epartement d'informatique, Universit\'e du Qu\'ebec en Outaouais, Gatineau,
Qu\'ebec J8X 3X7, Canada. {\tt pelc@uqo.ca}. Partially supported by NSERC discovery grant 8136--2013
and by the Research Chair in Distributed Computing at the
Universit\'e du Qu\'ebec en Outaouais.}

\footnotetext[2]{D\'epartement d'informatique, Universit\'e du Qu\'ebec en Outaouais, Gatineau,
Qu\'ebec J8X 3X7, Canada. {\tt narayanram.1988@gmail.com}}

\maketitle

\thispagestyle{empty}

\begin{abstract}
Two anonymous mobile agents navigate synchronously in an anonymous graph and have to meet at a node, using a deterministic algorithm. This is a symmetry breaking task called rendezvous, equivalent to the fundamental task of leader election between the agents. When is this feasible in a completely anonymous environment? It is known that agents can always meet if their initial positions are nonsymmetric, and that if they are symmetric and agents start simultaneously then rendezvous is impossible. What happens for symmetric initial positions with non-simultaneous start? Can symmetry between the agents be broken by the delay between their starting times?

In order to answer these questions, we consider {\em space-time initial configurations} (abbreviated by STIC). A STIC is formalized as $[(u,v),\delta]$, where $u$ and $v$ are initial nodes of the agents in some graph and $\delta$ is a non-negative integer that represents the difference between their starting times. A STIC is {\em feasible} if there exists a deterministic algorithm, even dedicated to this particular STIC, which accomplishes rendezvous for  it. Our main result is a characterization of all feasible STICs and the design of a universal deterministic algorithm that accomplishes rendezvous for all of them without {\em any } a priori knowledge of the agents. Thus, as far as feasibility is concerned, we completely solve the problem of symmetry breaking between two anonymous agents in anonymous graphs.
Moreover, we show that such a universal algorithm cannot work for all feasible STICs in time polynomial in the initial distance between the agents.

\vspace*{0.5cm}

\noindent
{\bf keywords:} anonymous graph, anonymous agent, rendezvous, symmetry breaking, universal algorithm

\vspace*{4cm}
\end{abstract}

\setcounter{page}{0}
\pagebreak

\section{Introduction}
Two anonymous mobile agents start from two nodes of an anonymous graph and have to meet at a node, using the same deterministic algorithm.  The agents navigate in synchronous rounds but may start with arbitrary delay, chosen by the adversary. In applications, agents may be mobile robots moving in corridors of a contaminated mine and taking samples of the ground, or they may be software agents navigating in a computer network and consulting a distributed database located in its nodes.  The task of meeting, also called rendezvous, is an extensively studied symmetry breaking task, equivalent to the fundamental task of leader election between the agents. When is it feasible in a completely anonymous environment? It is known that agents can always meet if their initial positions are nonsymmetric\footnote{Nodes are symmetric if they have the same views: see the precise definition in Section 2.}, and that if they are symmetric and agents start simultaneously then rendezvous is impossible. What happens for symmetric initial positions with non-simultaneous start? Can symmetry between the agents be broken by the delay between their starting times?

\noindent
{\bf The model and the problem.}
We consider simple finite undirected connected graphs $G=(V,E)$. The number of nodes of the graph is called its size. Nodes of the graph are unlabeled, but ports at each node of degree $d$ are labeled by integers $0,1,\dots, d-1$. There is no any coherence between port numbers at two extremities of an edge.

Two mobile agents start at different nodes of a graph and have to meet at the same node. Agents navigate in the graph in synchronous rounds.
In each round, an agent can either stay at the current node or move to an adjacent node by a chosen port. When an agent arrives at a node, it sees its degree and the port number by which it enters the node. Agents are identical (they have no labels) and execute the same deterministic algorithm. They have an unbounded memory: from the computational point of view they are modeled as Turing machines. They start in possibly different rounds, chosen by the adversary. The difference between these starting times is called the {\em delay} which can be any non-negative integer. Agents are equipped with synchronized clocks ticking once per round. An agent appears at its initial  node in the starting round and its clock starts at this time. Agents do not have any a priori knowledge: they don't know anything about the graph in which they navigate, they don't know their initial positions or the delay between their starting rounds. They don't have access to any global clock.
A {\em rendezvous} between agents occurs when they are at the same node in the same round. The time of a rendezvous algorithm is the worst-case number of rounds between the appearance of the later agent and the meeting. Agents can cross each other in an edge going in opposite directions but they do not notice it.

The rendezvous task is a symmetry-breaking problem. In fact, it is equivalent to the most fundamental symmetry-breaking problem, that of leader election \cite{Ly}. Applied to anonymous mobile agents, leader election calls for one of them to become the leader and the other to become non-leader.
To see the equivalence between rendezvous and leader election, suppose first that roles leader and non-leader are assigned to the agents.
Then the non-leader can wait at its initial node and the leader explores the graph and finds it. This algorithm, called ``waiting for Mommy''
 reduces rendezvous to exploration when the leader is elected. Conversely, suppose that the agents have met. Then they can compare their trajectories coded as sequences of encountered port numbers. Since agents were at different nodes at the beginning and they succeeded to meet, there must be some node to which the agents entered by different ports. Consider the last such node $v$ before meeting (in particular it may be the meeting node). The agent that entered node $v$ by the port with larger number can be elected as leader.  This equivalence between rendezvous and leader election justifies the importance of rendezvous and shows its fundamental symmetry-breaking character. 

Since rendezvous is a symmetry-breaking problem, agents have to do something differently in order to meet. Since they are identical and execute the same deterministic algorithm, this difference can come only from two sources: either  a space difference, i.e, the structure of the graph from the point of view of each agent looks different and entices them to make different decisions, or a time difference, i.e., agents make possibly the same decisions but at different times, which enables them to meet. The first type of difference can be, e.g., a different degree of the initial node which can be adopted as the label of the agent and facilitate rendezvous. The second difference can be exemplified in the two-node graph. If identical agents start in this graph with delay 3, executing the algorithm ``move at each round'', then they will meet 3 rounds after the start of the earlier agent.

It follows from \cite{CKP} that if agents start from nonsymmetric positions with arbitrary delay and know  some upper bound on the size of the graph then they can meet. On the other hand, if they start {\em simultaneously} from symmetric positions then meeting is impossible because  in each round they will be at different symmetric positions, regardless of the algorithm. This leaves the third possibility: what happens if agents start from symmetric positions with a positive delay? Another natural question is if some knowledge about the graph, about the initial positions or about the delay is necessary for meeting.

In order to answer these questions, we consider {\em space-time initial configurations} (abbreviated by STIC). A STIC is formalized as $[(u,v),\delta]$, where $u$ and $v$ are initial nodes of the agents in some graph and $\delta$ is a non-negative integer that represents the difference between their starting rounds. A STIC is {\em feasible} if there exists a deterministic algorithm, even dedicated to this particular STIC, which accomplishes rendezvous for  it. Now the questions stated above can be reformulated as follows.

 $\bullet$ Which STICs are feasible? Does there exist a universal deterministic algorithm that accomplishes rendezvous for all feasible STICs without any a priori knowledge?

\noindent
{\bf Our contribution.}
Our main result is a characterization of all feasible STICs and the design of a universal deterministic algorithm that accomplishes rendezvous for all of them without {\em any } a priori knowledge of the agents. Thus, as far as feasibility is concerned, we completely solve the problem of symmetry breaking between two anonymous agents in anonymous graphs. Moreover, we show that such a universal algorithm cannot work for all feasible STICs in time polynomial in the initial distance between the agents.

\noindent
{\bf Related work.}
The rendezvous problem was extensively studied in the literature, both in the deterministic and in the randomized scenario.
An excellent survey of  randomized rendezvous in various models  can be found in
\cite{alpern02b}, cf. also  \cite{alpern95a,alpern02a,anderson90,baston98}. 
Deterministic rendezvous in networks was surveyed in \cite{Pe}.
Several authors
considered the geometric scenario (rendezvous in an interval of the real line, see, e.g.,  \cite{baston98,baston01,gal99},
or in the plane, see, e.g., \cite{anderson98a,anderson98b}).
The extension of the rendezvous problem to several agents is usually called gathering, and was studied, e.g., 
in \cite{lim96,thomas92}. Gathering many labeled agents in the presence of Byzantine agents was studied in \cite{BDD,DPP}. 
The gathering problem was also studied in the context of oblivious robot systems in the plane, cf.
\cite{CP05,FPSW}, and fault tolerant gathering of robots in the plane was studied, e.g., in \cite{AP06,CP08}. 

For the deterministic setting, attention was usually concentrated on the feasibility of rendezvous, and on the time required to achieve this task, when feasible. For instance, deterministic rendezvous with agents equipped with tokens used to mark nodes was considered, e.g., in~\cite{KKSS}. Deterministic rendezvous of two agents with unique labels was discussed in \cite{DFKP,KM,TSZ}.
These papers considered the time of rendezvous in arbitrary
graphs. In \cite{CKP,DFKP} the authors showed a rendezvous algorithm polynomial in the size of the graph, in the length of the shorter
label and in the delay between the starting times of the agents. In \cite{KM,TSZ} rendezvous time was polynomial in the first two of these parameters and independent of the delay.
 In \cite{CKP,FP} the optimization criterion for rendezvous was the memory size of the agents:
it was studied in \cite{FP} for trees and in  \cite{CKP} for general graphs.
Memory needed for randomized rendezvous in the ring was discussed, e.g., in~\cite{KKPM08}. 

Apart from the synchronous model used in this paper, several authors considered asynchronous rendezvous in the plane \cite{CFPS,FPSW} and in networks 
\cite{BCGIL,CLP,DGKKP,DPV,GP}.
In the latter scenario the agent chooses the edge which it decides to traverse but the adversary controls the speed of the agent. Under this assumption rendezvous
in a node cannot be guaranteed even in very simple graphs and hence the rendezvous requirement is relaxed and agents are permitted to meet inside an edge. 

Computational tasks in anonymous networks were studied in the literature, starting with the seminal paper \cite{A},
followed, e.g., by \cite{ASW, BV,KKV}. While the considered tasks, such as leader election or computing Boolean functions differ
from rendezvous studied in the present paper, the main concern is usually symmetry breaking, similarly as in our case.

Deterministic rendezvous of anonymous agents in arbitrary anonymous graphs was previously studied in \cite{CKP,DP1,GP}. Papers \cite{CKP,DP1} were concerned with the synchronous scenario. The main result of \cite{CKP} was a rendezvous algorithm working for all nonsymmetric initial positions using memory logarithmic in the size of the graph. \cite{DP1} was concerned with gathering multiple anonymous agents and characterized initial positions that allow gathering with all starting times. The authors of \cite{GP} characterized initial positions that allow asynchronous rendezvous. None of these papers considered the issue  of breaking symmetry using time.

\section{Preliminaries}

Let $G$ be any graph and $v$ a node in this graph. The following notion is crucial for our considerations. The {\em view} from $v$ in $G$, denoted $\cV(v,G)$, is the tree of all paths in $G$, starting from node $v$ and coded as sequences of port numbers, where the rooted tree structure is defined by the prefix relation of sequences. This definition is equivalent  to that from \cite{YK3}. Nodes $u$ and $v$ in $G$ are called {\em symmetric}, if $\cV(u,G)=\cV(v,G)$.

It follows from \cite{CKP} that rendezvous of anonymous agents is possible for any STIC $[(u,v),\delta]$, where $u$ and $v$ are nonsymmetric nodes in a graph of size at most $n$ and $\delta$ is any non-negative integer, in time polynomial in $n$.

 Let $G$ be any graph.
For any node $v$ of degree $d(v)$ of $G$ and any integer $0\leq p \leq d(v)-1$, denote by $succ(v,p)$ the neighbor $w$ of $v$, such that 
the edge $\{v,w\}$ corresponds to the port $p$ at $v$.  Let $(a_1,a_2,...,a_k)$ be a sequence of integers.
An application of this sequence in the graph $G$ at node $u$ is the sequence of nodes $(u_0,u_1,...,u_{k+1})$ obtained as follows: $u_0 = u$, $u_1=succ(u_0,0)$; for any $1 \leq i \leq k$, $u_{i+1} = succ(u_i,(p+a_i))$ mod $d(u_i)$, where $p$ is the port number at node $u_i$ corresponding to the edge $(u_{i-1},u_i)$. We will use the notion of a Universal Exploration Sequence (UXS) \cite{Ko}. A sequence $Y(n)=(a_1,a_2,...,a_M)$ of integers, whose application $R(u)=(u_0,u_1,...,u_{M+1})$ in any graph of size $n$ at any node $u$ of this graph contains all the nodes of the graph is called a UXS
for the class of graphs of size $n$. It follows from \cite{Re} that there exists a polynomial length UXS
for the class of graphs of size $n$.

 
Consider a sequence of nodes $\pi=(u_1,...,u_{k+1})$ forming a path in a graph $G$. Suppose that $succ(u_i,p_i)=u_{i+1}$
and that $succ(u_{i+1},q_i)=u_{i}$, for $i=1,\dots,k$. Hence $\pi$ starts at $u_1$ and corresponds to the sequence of outgoing port numbers $(p_1,\dots, p_{k})$.
We define the reverse path $\overline{\pi}$ as the path $(u_{k+1}, u_k,\dots, u_1)$ starting at $u_{k+1}$ and corresponding to the 
sequence of outgoing port numbers $(q_k,\dots,q_1)$.

For any node $x$ of a graph $G=(V,E)$ and any sequence $\alpha = (p_1,p_2,...,p_s)$ of port numbers, define 
$\alpha(x)$ as the node $y$, such that the path from node $x$ following the consecutive outgoing port numbers $p_1,p_2,...,p_s$ ends at node $y$.

\section{The universal algorithm}

In this section, we characterize all feasible STICs and design a universal deterministic algorithm that accomplishes rendezvous for all of them without any  a priori knowledge of the agents. 
In our characterization we will use the following notion.


\begin{definition}
For any graph $G=(V,E)$ and any symmetric pair of nodes $u,v \in V$, $Shrink(u,v)$ is the smallest distance between $\alpha(u)$ and $\alpha(v)$,  over all possible sequences $\alpha$ of port numbers.
\end{definition}

Hence, intuitively, for any symmetric pair $u,v$ of nodes, $Shrink(u,v)$ is the smallest distance at which it is possible to get from nodes $u$ and $v$, by applying the same sequence of port numbers. For example, in an oriented torus, any pair of nodes is symmetric, and
$Shrink(u,v)$ is equal to the distance between $u$ and $v$, for any $u$ and $v$. By contrast, in a symmetric tree that is composed of a central edge with port-preserving isomorphic trees attached to both of its ends,  $Shrink(u,v)$ for any symmetric pair $(u,v)$ of nodes is always 1, although the distance between $u$ and $v$ can be arbitrarily large (i.e., in this case $Shrink$ can really shrink the initial distance).

The following result gives a necessary condition for feasibility of rendezvous starting from symmetric initial positions.

\begin{lemma}\label{infeasible}
For any symmetric pair $u$ and $v$ of nodes, a STIC $[(u,v),\delta]$ with delay $\delta <Shrink(u,v)$ is not feasible.
\end{lemma}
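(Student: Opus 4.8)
The plan is to argue by contradiction, using a symmetry-invariance argument for the two agents' trajectories. Suppose for contradiction that some deterministic algorithm $\cA$ (possibly dedicated to this STIC) accomplishes rendezvous for $[(u,v),\delta]$ with $\delta < Shrink(u,v)$, and suppose $u,v$ are symmetric, i.e. $\cV(u,G)=\cV(v,G)$. The key observation is that because the two agents are identical, run the same algorithm $\cA$, and start from nodes with identical views, their behaviors are "the same" up to the port-preserving correspondence induced by the equality of views: there is a natural way to match up the labeled paths emanating from $u$ with those emanating from $v$, and this matching is preserved by the local information an agent sees (degree of the current node, incoming port number). So if we let the earlier agent (say the one at $u$, WLOG) run for the $\delta$ rounds before the later agent wakes up, and then both run in lockstep, then at every subsequent round $t$, if the agent started at $u$ is at node $x$ after having traced port sequence $\alpha$ (from $u$), the agent started at $v$ is at the node obtained by tracing the \emph{corresponding} port sequence from $v$ — and I would argue that this corresponding sequence is in fact obtained by applying the same sequence $\alpha$ of port numbers from $v$, using that the matching of views is port-number-preserving.

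The crux is then this: at any round after the later agent wakes up, the agent from $u$ is at some node $\alpha(u)$ and the agent from $v$ is at the node $\alpha(v)$ for \emph{the same port sequence} $\alpha$ (here $\alpha$ is the full trajectory of the $u$-agent from its start; the $v$-agent has executed the suffix of $\alpha$ of length $(\text{current time}) - \delta$ — I need to be slightly careful and instead track it as: the $u$-agent's trajectory is $\alpha = \alpha_1\alpha_2$ where $\alpha_1$ has length $\delta$, and at the current moment the $v$-agent has traced $\alpha_2$ from $v$; but by the view-equality the node the $v$-agent occupies equals $\beta(v)$ for the port sequence $\beta$ that the $u$-agent would trace \emph{starting from its own position at time $\delta$}, which after unwinding the correspondence is just $\alpha_2$ read off the $u$-agent's actual moves — the point being that whatever port sequence describes the later agent's path from $v$, there is a matching path of the same description from $u$ that the earlier agent is simultaneously tracing). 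By definition of $Shrink$, for every port sequence $\alpha$ we have $d(\alpha(u),\alpha(v)) \geq Shrink(u,v)$. Meanwhile, during the first $\delta$ rounds the earlier agent moves at most $\delta$ times, so it travels distance at most $\delta$ from $u$; hence after those $\delta$ rounds the two agents are at nodes $x$ (with $d(x,u)\le\delta$) and $v$, and from then on they are at $\alpha(x)$ and $\alpha(v)$ type configurations. I would combine these: the distance between the two agents at any round after the later one wakes is at least $Shrink(u,v) - \delta > 0$ (the earlier agent's head start can have reduced the "$Shrink$ gap" by at most $\delta$, by the triangle inequality), so the agents are never at the same node — contradicting rendezvous.

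I expect the main obstacle to be making the view-correspondence argument fully rigorous: precisely, showing that the equality $\cV(u,G)=\cV(v,G)$ lets one define a bijection between the walks (with their local observations) from $u$ and from $v$ that is compatible with the deterministic algorithm, so that "the two agents do the same thing" is literally true at the level of port sequences. This is the standard fact underlying all impossibility results for anonymous networks (an agent's actions depend only on the sequence of (degree, incoming-port) pairs it has observed, and two agents with equal views observe identical such sequences when executing the same algorithm), but it must be invoked carefully — in particular one must handle the $\delta$-round head start, during which only one agent is moving, by noting that the later agent, upon waking at $v$, has an empty observation history exactly as the earlier agent did upon waking at $u$, so its entire subsequent trajectory (as a port sequence from $v$) coincides with the earlier agent's trajectory-from-its-own-wakeup (as a port sequence from $u$). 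Once that synchronization is set up, the quantitative part — the triangle-inequality bookkeeping giving distance $\ge Shrink(u,v)-\delta>0$ — is routine.
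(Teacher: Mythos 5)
Your overall plan is the same as the paper's: by equality of views the two agents generate identical port sequences measured from their own respective starting rounds, and the quantitative contradiction comes from the definition of $Shrink(u,v)$ together with the fact that an agent can traverse at most $\delta$ edges in $\delta$ rounds. However, the synchronization you set up in the middle of the argument is wrong, and the step built on it would fail as written. The later agent does \emph{not} trace the suffix $\alpha_2$ of the earlier agent's trajectory, nor ``the port sequence the $u$-agent would trace starting from its own position at time $\delta$'': upon waking it has an empty observation history, exactly as the earlier agent did at $u$, so it replays the \emph{prefix} of the earlier agent's trajectory. Hence at a common round the agents sit at $\sigma_{T+\delta}(u)$ and $\sigma_T(v)$, where $\sigma_t$ denotes the port sequence produced by the algorithm in its first $t$ rounds --- not at ``$\alpha(x)$ and $\alpha(v)$ for the same $\alpha$'' with $x$ the earlier agent's position at the later agent's wake-up. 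That framing also breaks your triangle-inequality bookkeeping: $d(u,x)\le\delta$ does not imply $d(\alpha(u),\alpha(x))\le\delta$, since applying the same port sequence at two nearby nodes can land at far-apart nodes, so the claimed bound ``the head start reduces the $Shrink$ gap by at most $\delta$'' is not justified by the objects you introduced.

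The repair is exactly what the paper does (and what your own closing remark about the ``trajectory-from-its-own-wakeup'' correctly suggests): compare the earlier agent's position \emph{now}, $\sigma_{T+\delta}(u)$, with its own position $\delta$ rounds \emph{earlier}, $\sigma_T(u)$. These two nodes are at distance at most $\delta$ because the agent physically walks from one to the other in $\delta$ rounds, while $d(\sigma_T(u),\sigma_T(v))\ge Shrink(u,v)$ by the definition of $Shrink$, since the same sequence $\sigma_T$ is applied at $u$ and at $v$. The triangle inequality then gives $d(\sigma_{T+\delta}(u),\sigma_T(v))\ge Shrink(u,v)-\delta>0$ at every round after the later agent wakes, so rendezvous never occurs. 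With this correction your argument coincides with the paper's proof.
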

\begin{proof}
We prove the lemma by contradiction. Suppose that rendezvous is feasible and $\delta <Shrink(u,v)$. 
Consider any deterministic rendezvous algorithm.
Let $t$ be the earliest time of rendezvous, counted from the start of the later agent. Since initial positions of the agents are symmetric, in each round the agents follow the edges of paths having the same outgoing ports and the same incoming ports. Since the earlier agent starts $\delta$ rounds ahead of the later agent,
the path traversed by the later agent until rendezvous follows the same sequence of port numbers as the path of the earlier agent until round $t-\delta$ from the start of the later agent. Call this round $t'$.

Let the distance between the later agent in round $t$ and the earlier agent in round $t'$ be $d'$. For rendezvous, the earlier agent should cover distance $d'$ during the  last $\delta$ rounds after time $t'$. By the definition of $Shrink(u,v)$ we have
$d' \geq Shrink(u,v)$. However, by assumption,  $\delta < Shrink(u,v)$, which gives a contradiction.
\end{proof}

\subsection{Rendezvous with known parameters from symmetric positions}

We first describe a procedure that solves the rendezvous problem for symmetric initial positions $(u,v)$ of the agents and a delay $\delta \geq Shrink(u,v)$, assuming that the size $n$ of the graph, the value $d$ of $Shrink(u,v)$ and  the delay $\delta$ are known. We will then use this procedure to solve the rendezvous problem in our scenario when the agents are ignorant of these parameters.

The procedure uses an application $R(u)$ of a UXS for the family of graphs of size $n$ at the starting position $u$ of the agent. It also uses procedure Explore$(u,d,\delta)$, where $d$ and $\delta$ are positive integers and $d \leq \delta$. In the procedure Explore $(u,d,\delta)$, the agent explores all possible paths of length $d$ starting at node $u$, each time backtracking along the reverse path, and waits $\delta-d$ rounds at node $u$. The idea of Procedure $SymmRV(n,d,\delta)$ is to follow $R(u)$, executing Explore$(u_i,d,\delta)$ at each node $u_i$, $1 \leq i \leq M$, of $R(u)$, and then to backtrack to $u$ along path $\overline{R(u)}$.

\begin{algorithm}
\Begin
{
Let $Y(n)=(a_1,a_2,...,a_M)$ be a UXS for the class of graphs of size $n$ \\
$u_0 = u$ \\
Explore$(u_0, d,\delta)$\\
$u_1= succ(u_0,0)$\\
Explore$(u_1, d,\delta)$\\
\For{$i=1$ to $M$}
{$q:=$ the port number by which the agent enters node $u_i$\\
$u_{i+1}=succ(u_i,(q+a_i)$ mod $d(u_i))$ \\
Explore$(u_{i+1}, d,\delta)$}
Go back to $u_0$ using path $\overline{(u_0,u_1,\dots, u_{M+1})}$  }
\caption{Procedure $SymmRV(n,d,\delta)$}
\label{basic}
\end{algorithm} 
 
 \begin{algorithm}
\Begin
{
\For{all possible paths $\pi$ of length $d$ starting at node $u$, in lexicographic order of corresponding port sequences}
{
Traverse path $\pi$\\
Traverse path $\overline{\pi}$\\
Wait $(\delta-d)$ rounds

}
 }
\caption{Procedure $Explore(u,d,\delta)$}
\label{explore}
\end{algorithm}

\begin{lemma}\label{lemma-symm}
Consider any STIC $[(u,v),\delta]$, such that $u$ and $v$ are symmetric nodes of a graph of size $n$ and 
$\delta \geq Shrink(u,v)$. A pair of agents starting from this STIC that executes
Procedure $SymmRV(n,d,\delta)$ achieves rendezvous.
\end{lemma}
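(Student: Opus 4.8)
The plan is threefold: (a) starting from symmetric nodes the two agents move in lock-step, so they are never at distance less than $d:=Shrink(u,v)$; (b) the UXS underlying $SymmRV$ forces the earlier agent to be, at some round $s$, at distance exactly $d$ from the later agent; and (c) from round $s$ on, the $Explore$ subroutine together with the slack $\delta\ge d$ lets the earlier agent walk onto the later agent.

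\emph{Lock-step motion.} Since $\cV(u,G)=\cV(v,G)$ and both agents run the same deterministic, port-based algorithm, an easy induction on rounds shows that after $t$ rounds of its own execution each agent has traversed the same port sequence $\rho_t$; hence the agent started at $u$ is at $\rho_t(u)$, the one started at $v$ is at $\rho_t(v)$, and these two nodes are symmetric. By the definition of $Shrink$, $\mathrm{dist}(\rho_t(u),\rho_t(v))\ge d$ for every $t$. Setting global time so that the earlier agent starts in round $0$ and the later one in round $\delta$, the agents meet in global round $g$ iff $\rho_g(u)=\rho_{g-\delta}(v)$.

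\emph{The meeting mechanism.} Suppose we have a round $s$ at which the earlier agent begins an execution of $Explore(w,d,\delta)$, so that $w=\rho_s(u)$, and such that $\tilde w:=\rho_s(v)$ satisfies $\mathrm{dist}(w,\tilde w)=d$ (note $w\sim\tilde w$). Take a shortest path from $\tilde w$ to $w$; it has length exactly $d$, and because $w\sim\tilde w$ its port sequence $\pi$ is also a valid length-$d$ path out of $w$, hence one of the paths enumerated by $Explore(w,d,\delta)$; note also $\pi(w)\sim\pi(\tilde w)=w$. Each $Explore$-iteration preceding the one handling $\pi$ is a traversal of some length-$d$ path from $w$ followed by the reverse traversal and a wait; this round trip returns the earlier agent to $w$, and, because $w\sim\tilde w$, the same sequence of ports applied from $v$ returns to $\tilde w$. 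So if $s'$ is the round at which the iteration handling $\pi$ begins, then $\rho_{s'}(u)=w$ and $\rho_{s'}(v)=\tilde w$. Let $t_1=s'+d$ be the round in which the earlier agent finishes traversing $\pi$: then $\rho_{t_1}(u)=\pi(w)$ and, crucially, $\rho_{t_1}(v)=\pi(\tilde w)=w$. During the next $d$ rounds the earlier agent backtracks to $w$, and during the following $\delta-d$ rounds it waits at $w$, so $\rho_{t_1+\delta}(u)=w$. Hence $\rho_{t_1+\delta}(u)=w=\rho_{t_1}(v)$: both agents occupy $w$ in global round $t_1+\delta$, which is at least $\delta$ (so the later agent has started) and does not exceed the length of $SymmRV(n,d,\delta)$ (so both agents are still executing). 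This is the rendezvous.

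\emph{Producing the configuration --- the hard part.} It remains to exhibit a round $s$ as above: a round that starts some $Explore$ execution and at which $\rho_s$ attains the minimum defining $Shrink(u,v)$. I expect this to be the main obstacle. The natural attempt fixes a port sequence $\alpha$ with $\mathrm{dist}(\alpha(u),\alpha(v))=d$ and uses that the application of $Y(n)$ from $u$ visits every node, in particular $\alpha(u)$, so the earlier agent performs $Explore$ at $\alpha(u)$ at some round $s$; one then wants $\rho_s(v)$ to be exactly $\alpha(v)$, giving $\mathrm{dist}(w,\tilde w)=d$. The subtlety is that lock-step motion only yields that $\rho_s(v)$ is \emph{symmetric} to $\alpha(u)$, and a graph can contain several nodes with the same view that no port-preserving automorphism interchanges, so $\rho_s(v)$ need not literally be $\alpha(v)$. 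Closing this seems to require either showing that the particular port history $\rho_s$ carries the pair $(u,v)$ onto $(\alpha(u),\alpha(v))$, or relaxing ``$s$ starts an $Explore$'' to ``$s$ lies inside an $Explore$ execution'' and exploiting that the earlier agent reaches the $d$-neighbourhood of $\alpha(u)$ along many distinct port histories, at least one of which brings the two agents to distance exactly $d$, with $w$ then taken to be the earlier agent's current node. Granted such an $s$, the mechanism of the previous paragraph applies verbatim and yields the rendezvous.
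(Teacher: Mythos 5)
Your parts (a) and (b) are sound and match the paper's argument in all essentials: the lock-step observation is exactly the paper's, and your meeting mechanism is the mirror image of the paper's (the paper has the \emph{earlier} agent traverse the shortest length-$d$ path from its Explore centre $u_j$ and arrive at $v_j$ precisely when the later agent begins its $(\delta-d)$-round wait there, whereas you have the \emph{later} agent walk onto the node where the earlier agent is waiting; both timings check out, since each iteration of the \textbf{for} loop costs $d+\delta$ rounds for both agents). The problem is that part (c) is not a proof but an admission that the key configuration has not been produced. Without exhibiting a round at which the two agents simultaneously begin executions of $Explore$ at nodes at distance \emph{exactly} $d=Shrink(u,v)$, nothing in (b) ever fires, so the proposal does not establish the lemma.

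The paper closes exactly this step, and does so in one line: it fixes $\beta$ with $\mathrm{dist}(\beta(u),\beta(v))=d$, lets $u_j$ be the first node of the UXS application at $u$ equal to $u'=\beta(u)$, and asserts that the corresponding node $v_j$ of the UXS application at $v$ equals $v'=\beta(v)$ ``by symmetry of $u$ and $v$ and by the definition of $u'$ and $v'$.'' That is, it claims that the UXS prefix $\sigma_j$ satisfying $\sigma_j(u)=\beta(u)$ also satisfies $\sigma_j(v)=\beta(v)$, which yields $\mathrm{dist}(u_j,v_j)=d$ and lets the mechanism run. Your stated worry --- that two port sequences agreeing at $u$ need not agree at $v$ when $u$ and $v$ are view-equivalent but not exchanged by a port-preserving automorphism --- is aimed precisely at this assertion; the paper does not address it beyond the quoted sentence, and neither of your two proposed repair routes is carried out. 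So the verdict is: correct mechanism, but a genuine gap at the one place where the argument actually has to connect the UXS to the definition of $Shrink$; to complete the proof along the paper's lines you must justify (or at least explicitly invoke) the implication $\sigma_j(u)=\beta(u)\Rightarrow\sigma_j(v)=\beta(v)$, since the weaker fact you do have, namely $\cV(\sigma_j(v),G)=\cV(\beta(v),G)$, only gives $\mathrm{dist}(u_j,v_j)\ge d$, which is the wrong direction.
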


\begin{proof}
Consider agents that are initially located at symmetric positions $u$ and $v$ in a graph of size $n$ and start with delay $\delta \geq Shrink(u,v)$. Without loss of generality, let the agent starting at $v$ be the later agent.
Let $\beta$ be any sequence of port numbers such that $u'=\beta(u)$
is at distance $d=Shrink(u,v)$ from $v'=\beta(v)$.
Let $u_j$ be the first node in the application of UXS $Y(n)$ at $u$, such that $u'=u_j$. 
Consider the node $v_j$ in the application of UXS $Y(n)$ at $v$.
By symmetry of nodes $u$ and $v$ and by the definition of $u'$ and $v'$, we have that $v'=v_j$.

Consider the first execution $E_1$ of Procedure $Explore(u_j,d,\delta)$ during the execution of  Procedure $SymmRV(n,d,\delta)$ by the earlier agent, and the first execution $E_2$ of Procedure $Explore(v_j,d,\delta)$ during the execution of  Procedure $SymmRV(n,d,\delta)$ by the later agent. By symmetry of $u$ and $v$, the execution of $E_1$ starts $\delta$ rounds ahead of the execution $E_2$.

Let $(\pi_1,\pi_2,\dots ,\pi _s)$ be the sequence of paths in lexicographic order of corresponding port sequences, traversed in the execution of Procedure $Explore(u_j,d,\delta)$.
Suppose that the first path of length $d$ which leads from $u'$ to $v'$ is $\pi_i$. Let $t$ be the round number counted from the start of the later agent when it starts execution $E_2$. Until the end of this proof we will count all round numbers from the start of the later agent.
Each execution of the {\bf for} loop in Procedure $Explore$ lasts $2d+(\delta-d)=d+\delta$ rounds. Let $x=t+(d+\delta) (i-1)$ be the round when the later agent ends the $(i-1)$th execution of the {\bf for} loop in the execution $E_2$. 
Hence the later agent waits at node $v'$ during the time interval $[x-(\delta - d),x]$. Now the earlier agent ends 
the $(i-1)$th execution of the {\bf for} loop in the execution $E_1$ in round $x-\delta$, and finishes traversing path $\pi _i$
in round $x-\delta+d$. Thus it gets to node $v'$ at the beginning of the waiting period of the later agent, and rendezvous is accomplished. 
\end{proof}

\begin{lemma}\label{time-symm}
The maximum time of execution of the Procedure $SymmRV(n,d,\delta)$ is $T(n,d,\delta)=[(d+\delta) \cdot (n-1)^d] \cdot (M+2) +2 \cdot (M+1)$, where $M$ is the length of UXS for the class of graphs of size $n$.
\end{lemma}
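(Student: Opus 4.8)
The plan is simply to add up the running times of the components of Procedure $SymmRV(n,d,\delta)$. First I would count the invocations of Procedure $Explore$ inside $SymmRV$: there is one call on $u_0$, one on $u_1$, and one on each of $u_2,\dots,u_{M+1}$ performed inside the \textbf{for} loop, for a total of $M+2$ invocations of $Explore$.

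Next I would bound the time of a single execution of $Explore(u_i,d,\delta)$. Each iteration of its \textbf{for} loop consists of traversing a path of length $d$ (which takes $d$ rounds), traversing its reverse (another $d$ rounds), and waiting $\delta-d$ rounds; hence each iteration lasts exactly $2d+(\delta-d)=d+\delta$ rounds. The number of iterations equals the number of legal port sequences of length $d$ starting at $u_i$; since the graph has $n$ nodes and is simple, every node has degree at most $n-1$, so there are at most $(n-1)^d$ such sequences. Therefore each call to $Explore$ lasts at most $(d+\delta)\cdot(n-1)^d$ rounds, and the $M+2$ calls together last at most $(M+2)\cdot(d+\delta)\cdot(n-1)^d$ rounds.

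Then I would account for the remaining moves made by $SymmRV$ itself: the single step $u_0\to u_1$, the $M$ steps $u_i\to u_{i+1}$ performed inside the \textbf{for} loop (exactly one per value of $i$, since each $Explore$ call returns the agent to the node at which it was invoked, so the walk defined by the UXS is traversed edge by edge), and the final backtracking along $\overline{(u_0,u_1,\dots,u_{M+1})}$, which is a path with $M+1$ edges and hence costs $M+1$ rounds. This contributes $1+M+(M+1)=2(M+1)$ rounds. Adding the two contributions gives exactly $T(n,d,\delta)=[(d+\delta)\cdot(n-1)^d]\cdot(M+2)+2\cdot(M+1)$.

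I do not expect any genuine obstacle here: the argument is a routine accounting of costs. The only points that need a little care are the bookkeeping of the number of $Explore$ invocations (not forgetting the two calls, on $u_0$ and $u_1$, made before the \textbf{for} loop), the observation that each $Explore$ call is self-returning so that the moves along the UXS and the backtracking are counted correctly (a total of $2(M+1)$ single-edge moves), and the justification that from any node there are at most $(n-1)^d$ port sequences of length $d$, using simplicity of the graph to bound the degree of every node by $n-1$.
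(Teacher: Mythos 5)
Your proposal is correct and follows essentially the same accounting as the paper's own proof: $M+2$ calls to $Explore$, each bounded by $(d+\delta)\cdot(n-1)^d$ rounds (with the same degree bound $n-1$ justifying the $(n-1)^d$ count of port sequences), plus $M+1$ forward moves along the UXS walk and $M+1$ backtracking moves, giving $2(M+1)$. No substantive difference from the paper's argument.
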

\begin{proof}
Let $Y(n)=(a_1,a_2,...,a_M)$ be a UXS for the class of graphs of size $n$.
The call to Procedure $Explore(u_i,d,\delta)$ for each node $u_i$, $0 \leq i \leq M+1$, makes the agent traverse all possible paths of length $d$ starting at node $u_i$, each time backtracking along the reverse path and waiting $\delta-d$ rounds at node $u_i$. Since the size of the graph is $n$, the number of possible paths of length $d$ starting at node $u_i$ can be at most $(n-1)^d$. Hence, the Procedure $Explore(u_i,d,\delta)$ requires at most $(2d+\delta-d) \cdot (n-1)^d$ rounds at each node $u_i$, $0 \leq i \leq M+1$. 

So, the total time of traversing the graph along the path $R(u)=(u_0,u_1,...,u_{M+1})$ and executing the Procedure $Explore(u_i,d,\delta)$ at each node $u_i$, for $0 \leq i \leq M+1$, is bounded by  $[(d+\delta) \cdot (n-1)^d] \cdot (M+2) + (M+1)$ rounds. Aferwards, the agent goes back to its initial position $u_0$ using the reverse path $\overline{(u_0,u_1,\dots, u_{M+1})}$, in another $M+1$ rounds. Hence, the total execution time of the Procedure $SymmRV(n,d,\delta)$ is bounded by $T(n,d,\delta)=[(d+\delta) \cdot (n-1)^d] \cdot (M+2) + (M+1) + (M+1) = [(d+\delta) \cdot (n-1)^d] \cdot (M+2) +2 \cdot (M+1)$. 
\end{proof}

Let  $AsymmRV(n)$ be the procedure from \cite{CKP} that solves the rendezvous problem when the agents start with arbitrary delay from arbitrary nonsymmetric initial positions in any graph of size $n$. The following proposition is implied by the results of \cite{CKP}. 
\begin{proposition}\label{lemma-asymm}
For any STIC $[(u,v),\delta]$, such that $u$ and $v$ are nonsymmetric nodes in a graph of size $n$ and $\delta \geq 0$, a pair of anonymous agents starting from this STIC and executing Procedure $AsymmRV(n)$ achieves rendezvous in time at most $P(n)$, where $P(n)$ is polynomial in $n$.
\end{proposition}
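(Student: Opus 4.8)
Since Procedure $AsymmRV(n)$ is taken verbatim from \cite{CKP}, and the fact quoted in Section~2 already states that rendezvous of anonymous agents from nonsymmetric positions in an $n$-node graph with arbitrary delay is achievable in time polynomial in $n$, I would prove the proposition simply by recalling the structure of $AsymmRV(n)$ and checking that its running time, counted from the appearance of the later agent, is bounded by a polynomial in $n$ alone; no idea beyond those of \cite{CKP} is required.

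The skeleton of the argument is to turn the anonymous agents into \emph{labelled} ones. Knowing $n$, each agent first performs a polynomial-length exploration of $G$ --- based on a UXS for the class of $n$-node graphs, which has polynomial length by \cite{Re} --- and from what it observes it extracts a finite \emph{signature} of its initial node. Since $u$ and $v$ are nonsymmetric, i.e. $\cV(u,G)\neq\cV(v,G)$, these signatures can be made to differ for the two agents; producing such distinguishing signatures from a \emph{polynomial}-length exploration, rather than by the naive (exponentially large) unfolding of the view, is exactly the technical content of \cite{CKP}, which I would invoke as a black box. Each agent then compares its own signature with the other's --- pure local computation, costing no rounds --- and the one whose signature is, say, lexicographically smaller declares itself \emph{leader}. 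This assignment is consistent for the two agents because it depends only on the signatures, not on when they were computed.

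With the roles fixed, the agents run a delay-robust ``waiting for Mommy''. The non-leader returns to its initial node and stays there forever; the leader returns to its initial node and then repeats, forever, an application of a polynomial-length UXS for $n$-node graphs, each application followed by a backtrack to its starting node so that the next one can begin. Each such application visits all nodes of $G$, in particular the non-leader's initial node, where the non-leader is waiting; at that round both agents are at the same node and rendezvous occurs. For the bound, let $E(n)$ (polynomial) be an upper bound on the number of rounds an agent needs to build its signature and return home, and let $Q(n)$ (polynomial) be the length of one UXS application plus its backtrack. Counting rounds from the appearance of the later agent: within $E(n)$ rounds both agents have completed this preliminary phase --- the earlier agent no later than the later one, having started earlier --- so from that point on the non-leader sits at its node and the leader is already inside its exploration loop; hence within at most one further full application, i.e. at most $2Q(n)$ additional rounds, the leader passes through the non-leader's node. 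Therefore rendezvous is reached within $P(n):=E(n)+2Q(n)$ rounds of the later appearance, and $P$ is a polynomial in $n$.

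The one point that needs genuine care is the unboundedness of $\delta$: a priori the leader might complete many explorations before the later agent ever wakes up. This is precisely why the scheme uses the ``forever'' clauses --- the non-leader never leaves its node and the leader never stops re-exploring --- so that, measured from the later appearance, the steady configuration (non-leader waiting, leader looping) is reached after only polynomially many rounds, after which a single further exploration pass finishes the task; $\delta$ never enters the estimate. This is the mechanism underlying \cite{CKP}, and carrying out the bookkeeping above completes the proof.
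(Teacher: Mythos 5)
The paper does not actually prove this proposition: it simply defines $AsymmRV(n)$ to be the procedure of \cite{CKP} and states that the proposition ``is implied by the results of \cite{CKP}.'' So your attempt to reconstruct the argument is going beyond what the paper does, which is fine in principle --- but your reconstruction has a genuine flaw at its central step. You have each agent compute a signature of its own initial node and then ``compare its own signature with the other's --- pure local computation, costing no rounds.'' An agent has no access to the other agent's signature: the agents never communicate, cannot leave marks, and do not know each other's initial positions, so there is nothing to compare against. The rule ``the lexicographically smaller signature is the leader'' is therefore not computable by either agent, and the subsequent ``waiting for Mommy'' phase never gets a consistent role assignment. (A naive repair such as ``I am leader iff my signature is the minimum over all node signatures in the graph'' also fails, since possibly neither of the two occupied nodes attains the minimum.)

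The actual mechanism in \cite{CKP} (and in the labelled-rendezvous literature it builds on, e.g.\ \cite{DFKP,TSZ}) is different: nonsymmetry of $u$ and $v$ is used only to equip the two agents with \emph{distinct} labels of polynomial length, derived from each agent's own observations; one then runs a rendezvous algorithm for agents with distinct labels, which works \emph{without} either agent knowing the other's label --- typically by interleaving ``active'' exploration and ``passive'' waiting phases according to the bits of one's own label, so that distinctness of the labels guarantees a round in which one agent explores while the other waits. The time of that stage is polynomial in $n$ and in the label length, hence polynomial in $n$. If you replace your explicit leader election by this label-based rendezvous step (still citing \cite{CKP} as a black box for both the polynomial-length distinguishing labels and the label-based meeting), your bookkeeping for the time bound goes through; as written, however, the argument does not.
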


\subsection{Universal anonymous rendezvous}

Now, we describe the Algorithm $UniversalRV$ that solves the rendezvous problem for any STIC $[(u,v),\delta]$ such that either $(u,v)$ are nonsymmetric nodes and $\delta$ is any non-negative integer, or $(u,v)$ are symmetric nodes and $\delta \geq Shrink(u,v)$. Algorithm $UniversalRV$ does not use any a priori knowledge whatsoever: it has no information about the graph, its size or about the STIC.

Let $\mathcal{N}$ be the set of positive integers.
We will use the function $f:\mathcal{N}\times \mathcal{N} \longrightarrow \mathcal{N}$ given by the formula $f(x,y)=x+(x+y-1)(x+y-2)/2$. 
The function $f$ is a bijection from $\mathcal{N}\times \mathcal{N}$ to  $\mathcal{N}$. Hence the function $g: \mathcal{N}\times \mathcal{N}\times \mathcal{N} \longrightarrow \mathcal{N}$ given by the formula $g(x,y,z)=f(f(x,y),z)$ is a bijection from $ \mathcal{N}\times \mathcal{N}\times \mathcal{N}$ to $\mathcal{N}$.

At a high level, Algorithm $UniversalRV$ is executed in phases $P=1,2,\dots$ and interrupted when rendezvous is achieved. For any phase $P$, let  $(n,d,\delta)=g^{-1}(P)$. In the phase corresponding to the triple $(n,d,\delta)$, the algorithm ``assumes'' that $n$ is the size of the graph, $d=Shrink(u,v)$ if the initial positions are symmetric and $\delta$ is the delay between the starting times of the agents. The phase is executed if $d<n$ because $Shrink(u,v)$ is a distance between some nodes of a graph, and hence must be smaller than its size. First Procedure $AsymmRV(n)$ is executed in the hope that the intial positions are nonsymmetric, in which case the agents should meet executing it. If this does not happen by the time prescribed by Proposition \ref{lemma-asymm}, the agent backtracks to its initial position and waits so that the rest of the phase be executed
from the initial position, with the same delay as at the start of phase 1. Then, if $\delta \geq d$,  Procedure  $SymmRV(n,d,\delta)$ is executed in the hope that 
the initial positions are symmetric. If this also fails, it means that the assumptions of the current phase were not true, and the agent starts the next phase at its initial position, after waiting sufficient time to ensure that the new phase is started with the original delay.
We will prove that rendezvous occurs at the latest during the phase corresponding to the triple $(n,d,\delta)$,
which satisfies the following conditions:  $n$ is the actual size of the graph, $\delta$ is the actual delay and 
either the initial positions $u$ and $v$ are nonsymmetric  or they are  symmetric, 
$d=Shrink(u,v)$ and $\delta \geq  Shrink(u,v)$.

Algorithm 3  gives the pseudocode of our universal algorithm. It is interrupted as soon as the rendezvous is achieved (which can occur in the middle of a phase).
 \begin{algorithm}
\Begin
{
$P:=1$ \\
Repeat forever\\
\Begin{
 $(n,d,\delta):=g^{-1}(P)$\\
 \If{$d<n$}{
 Execute Procedure $AsymmRV(n)$ for $P(n)+\delta$ rounds\\
	{
 $\pi:=$ the path traversed during the execution of $AsymmRV(n)$ from the initial position $u$\\
 Backtrack to the initial position $u$ along the path $\overline{\pi}$ \\
 Wait until $2 (P(n)+\delta)$ rounds from the start of Procedure $AsymmRV(n)$  
	} \\
\If{$\delta \geq d$}
{
Execute Procedure $SymmRV(n,d,\delta)$\\
Wait until $T(n,d,\delta)$ rounds from the start of Procedure $SymmRV(n,d,\delta)$ 
}
}
$P:=P+1$
 }}
\caption{ $UniversalRV$}
\label{universal}
\end{algorithm}

\begin{theorem}\label{main}
Consider any STIC $[(u,v),\delta]$, such that either $(u,v)$ are nonsymmetric nodes and $\delta$ is any non-negative integer or $(u,v)$ are symmetric nodes and $\delta \geq Shrink(u,v)$. A pair of agents starting from this STIC that executes Algorithm $UniversalRV$ achieves rendezvous. 
\end{theorem}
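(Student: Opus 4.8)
# Proof Proposal

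The plan is to show that Algorithm $UniversalRV$ eventually reaches a "good" phase, and that rendezvous is guaranteed to occur at the latest during that phase. First I would identify the good triple $(n^*, d^*, \delta^*)$: set $n^*$ equal to the actual size of the graph, $\delta^*$ equal to the actual delay, and — in the symmetric case — $d^* = Shrink(u,v)$, while in the nonsymmetric case $d^*$ can be taken to be any value strictly less than $n^*$ (say $d^*=1$), since that branch of the phase only invokes $AsymmRV$. Because $g$ is a bijection from $\mathcal{N}^3$ onto $\mathcal{N}$, there is a unique phase $P^* = g(n^*,d^*,\delta^*)$, and since $d^* < n^*$ this phase is actually executed (not skipped by the $d<n$ test). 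Since the algorithm runs phases $P=1,2,\dots$ forever unless interrupted by rendezvous, either rendezvous happens before phase $P^*$ (and we are done), or the agents both reach phase $P^*$.

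The second step is the crucial synchronization invariant: \emph{both agents enter each phase $P$ from their respective initial positions $u$ and $v$, with the same delay $\delta$ between their starts as at the very beginning.} This is what the padding in Algorithm 3 is designed to enforce. I would prove this by induction on $P$. At the start of phase 1 it holds by hypothesis. Within a phase, each agent: (i) runs $AsymmRV(n)$ for exactly $P(n)+\delta$ rounds, (ii) backtracks along the recorded reverse path $\overline{\pi}$ to its initial position, and (iii) waits until exactly $2(P(n)+\delta)$ rounds have elapsed since the start of $AsymmRV(n)$; then, if $\delta \ge d$, it (iv) runs $SymmRV(n,d,\delta)$ and (v) waits until exactly $T(n,d,\delta)$ rounds have passed since that procedure began. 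Because every one of these stage-lengths is a deterministic function of $(n,d,\delta)$ only — identical for both agents — the total duration of phase $P$ is the same fixed number of rounds for each agent, and each agent finishes the phase back at its own initial node. Hence the delay between the two agents' entries into phase $P+1$ equals the delay at entry into phase $P$, which by induction equals $\delta$. (One should note that backtracking in step (ii) is always possible: the path $\pi$ traversed by $AsymmRV$ is a walk in the graph, so reversing the recorded outgoing-port sequence via the incoming ports returns the agent to $u$; and since step (iii) waits until round $2(P(n)+\delta)$, there is always enough time, as backtracking takes at most $P(n)+\delta$ rounds.)

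The final step applies the earlier lemmas to phase $P^*$. By the invariant, both agents enter phase $P^*$ from $u$ and $v$ with the true delay $\delta = \delta^*$, and here $n = n^*$ is the true graph size and $d = d^*$. In the nonsymmetric case, the agents execute $AsymmRV(n^*)$ for $P(n^*)+\delta$ rounds; by Proposition~\ref{lemma-asymm} they meet within $P(n^*)$ rounds of the later agent's start, hence within that window, and the algorithm halts. In the symmetric case, $AsymmRV$ may fail, but then the invariant guarantees that after the backtracking/waiting the agents resume from $u$ and $v$ with delay $\delta^* \ge d^* = Shrink(u,v)$, so the $\delta \ge d$ test passes and they execute $SymmRV(n^*, d^*, \delta^*)$ on a STIC satisfying exactly the hypotheses of Lemma~\ref{lemma-symm}; that lemma yields rendezvous. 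The main obstacle — and the part deserving the most care — is establishing the synchronization invariant: one must verify that the padding bounds $2(P(n)+\delta)$ and $T(n,d,\delta)$ genuinely dominate the worst-case running times of $AsymmRV$ plus backtracking and of $SymmRV$ respectively (the former from Proposition~\ref{lemma-asymm}, the latter from Lemma~\ref{time-symm}), so that the "wait until" instructions are never vacuous and the phase length is truly a common constant; everything else then follows by bookkeeping.
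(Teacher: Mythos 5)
Your proposal is correct and follows essentially the same route as the paper's proof: identify the phase whose triple matches the true parameters, argue that the fixed padding makes every earlier phase (and the prefix of the target phase) consume the same number of rounds for both agents so they re-enter each stage from $u$ and $v$ with the original delay $\delta$, and then invoke Proposition~\ref{lemma-asymm} in the nonsymmetric case and Lemma~\ref{lemma-symm} in the symmetric case. Your explicit induction on the synchronization invariant is simply a more detailed rendering of the paper's one-line justification via the waiting times.
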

\begin{proof}
Consider any STIC $[(u,v),\delta]$ in a graph $G$ of size $n$. First consider the case when $u$ and $v$ are nonsymmetric. Then the agents meet at the latest in the first phase $P$ such that $g^{-1}(P)=(n,d,\delta)$ by Proposition
\ref{lemma-asymm}. Notice that, while Proposition \ref{lemma-asymm} guarantees the meeting of the agents starting from nonsymmetric positions in time at most $P(n)$ (counted, as usual, from the start of the later agent), in our algorithm each agent executes Procedure $AsymmRV(n)$ for $P(n)+\delta$ rounds because the agent does not know whether it is earlier or later and hence  it must make this precaution, in order to ensure sufficient time for meeting in the case when it is the earlier agent. (In \cite{CKP} it was {\em assumed} that the initial positions are nonsymmetric, hence agents simply executed the procedure until rendezvous).  

Hence we may suppose that nodes $u$ and $v$ are symmetric and $\delta \geq Shrink(u,v)$.
Consider the first phase $P'$ such that $g^{-1}(P')=(n,d',\delta)$, where $Shrink(u,v)=d'$. 
For every phase $P''<P'$, each of the agents uses the same number of rounds to execute it. This is due to the waiting times after each execution of procedures $AsymmRV$ and $SymmRV$. Moreover, each of the agents uses the same number of rounds in phase $P'$ before starting the execution of Procedure $SymmRV(n,d',\delta)$. Hence, if the agents have not met before, they start the execution of Procedure $SymmRV(n,d',\delta)$ in phase $P$ with the original delay $\delta$. By Lemma \ref{lemma-symm}, they must meet by the end of the execution of Procedure $SymmRV(n,d',\delta)$, hence by the end of phase $P'$.
\end{proof}

Lemma \ref{infeasible} and Theorem \ref{main} imply the following corollary.

\begin{corollary}
1. A STIC $[(u,v),\delta]$ is feasible if and only if either $(u,v)$ are nonsymmetric nodes and $\delta$ is any non-negative integer or $(u,v)$ are symmetric nodes and $\delta \geq Shrink(u,v)$.\\ 
2. Algorithm $UniversalRV$ achieves rendezvous for any feasible STIC with no a priori knowledge. 
\end{corollary}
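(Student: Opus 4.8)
The plan is to derive both parts directly from Lemma~\ref{infeasible} and Theorem~\ref{main}; no new machinery is needed, since the corollary is essentially a repackaging of results already established.

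For Part~1, I would prove the two directions separately. For sufficiency (the ``if'' direction), I would split on whether $(u,v)$ is symmetric. If $(u,v)$ are nonsymmetric, feasibility for any $\delta\geq 0$ follows from Proposition~\ref{lemma-asymm} (equivalently, from the remark citing \cite{CKP} after the definition of symmetric nodes): Procedure $AsymmRV(n)$ is a deterministic algorithm achieving rendezvous from this STIC, so the STIC is feasible by definition. If $(u,v)$ are symmetric and $\delta\geq Shrink(u,v)$, then Theorem~\ref{main} asserts that Algorithm $UniversalRV$ accomplishes rendezvous from this STIC, hence it is feasible; one could alternatively invoke Lemma~\ref{lemma-symm} directly with the dedicated Procedure $SymmRV(n,Shrink(u,v),\delta)$. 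For necessity (the ``only if'' direction), I would argue by contraposition: suppose the STIC is feasible but does not fall into either of the two listed cases. Then $(u,v)$ must be symmetric (nonsymmetric positions are covered by the first case for every $\delta$), and $\delta<Shrink(u,v)$. But Lemma~\ref{infeasible} says exactly that such a STIC is not feasible --- a contradiction. Hence every feasible STIC has one of the two stated forms, which together with sufficiency yields the equivalence.

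For Part~2, I would simply observe that Theorem~\ref{main} already states that Algorithm $UniversalRV$ achieves rendezvous for every STIC of the two forms characterized in Part~1, i.e., for every feasible STIC; and the algorithm as described begins with $P:=1$ and only ever reads $g^{-1}(P)$ together with quantities it computes itself (path traversed, elapsed rounds, time bounds $P(n)$ and $T(n,d,\delta)$), never the graph, its size, or the STIC. Thus it uses no a priori knowledge, and Part~2 is immediate.

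I do not expect any genuine obstacle here. The only points requiring a moment's care are: (i) ensuring the nonsymmetric case of sufficiency is properly sourced --- Proposition~\ref{lemma-asymm} handles arbitrary delay, and in $UniversalRV$ the agent runs $AsymmRV(n)$ for $P(n)+\delta$ rounds precisely to cover the case in which it is the earlier agent; and (ii) checking that the word ``feasible'' in the corollary matches the paper's definition (a deterministic algorithm, possibly dedicated to the particular STIC), so that the conclusion of Lemma~\ref{infeasible} --- that no such algorithm exists when the positions are symmetric and $\delta<Shrink(u,v)$ --- is exactly the statement needed for the necessity direction.
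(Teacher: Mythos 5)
Your proposal is correct and follows exactly the route the paper intends: the paper states only that Lemma~\ref{infeasible} and Theorem~\ref{main} imply the corollary, and your argument is precisely the spelled-out version of that implication (Theorem~\ref{main} for sufficiency and Part~2, Lemma~\ref{infeasible} via contraposition for necessity). No gaps.
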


\section{Complexity of universal anonymous rendezvous}

In this section, we discuss the complexity of universal anonymous rendezvous.  First consider the Procedure $SymmRV(n,d,\delta)$. By Lemma \ref{time-symm}, this procedure takes time $W(n,\delta)\cdot n^d$, where $d$ is the value of $Shrink$ for the initial symmetric pair of nodes, and $W(n,\delta)$ is some polynomial, because  the length $M$ of the UXS $Y(n)$ is polynomial in $n$. Hence it is natural to ask if the dependence on $d$ must be exponential.
We will show that the answer to this question is positive, by constructing a graph in which all pairs of nodes are symmetric, and  any algorithm that achieves rendezvous for all pairs of nodes at initial distance $D$ with delay $\delta=D$ must work in time exponential in $D$. Since the initial distance between symmetric nodes $u$ and $v$ is not smaller than $Shrink(u,v)$, this will show that the dependence on $d$ must also be exponential.  

We start our construction by defining the following tree, see Fig. \ref{Q_h-tree}.

For any positive integer $h$, the tree $Q_h$ of height $h$ and rooted at node $r$ is defined as follows:
\begin{itemize}
\item All the nodes are unlabeled.  All the leaf nodes are at distance $h$ from the root. Every non-leaf node is of degree 4.
\item Ports at each non-leaf node are labeled $N,S,E$ and $W$, and all edges have either ports $N-S$ or $E-W$ at their extremities.  
\end{itemize}   

Next, we modify the tree $Q_h$ by keeping the same set of nodes and adding some edges.
For any positive integer $h$, the graph  $\hat{Q}_h$ is obtained from the tree $Q_h$ by adding some edges between leaves of $Q_h$. There are four types of leaf nodes in $Q_h$. The leaves having the single port number $N, S, E$ and $W$ are called $N$ type, $S$ type, $E$ type and $W$ type leaves, respectively. The tree $Q_h$ has $4\cdot3^{h-1}$ leaves. Among them, there are  $3^{h-1}$ leaves of each type. 
Let $x=3^{h-1}$. For $A \in \{N,S,E,W\}$,  let  $\{A_1,A_2,\dots ,A_x\}$ be the leaves of the $A $ type, ordered in any way. (These labels are only put to clarify the construction of the graph, whose nodes are anonymous).

The set of nodes of $\hat{Q}_h$ is the same as that of $Q_h$ and all the edges from $Q_h$ remain. The set of additional edges between leaves of ${Q}_h$ is defined as follows (see Fig.  \ref{Q_h-tree}).  
\begin{itemize}
\item 
For any $i \leq x$, add the edge joining $N_i$ with $S_i$ and put port $S$ at node $N_i$ and port $N$ at node $S_i$ corresponding to this edge. For any $i \leq x$, add the edge joining $E_i$ with $W_i$ and put port $W$ at node $E_i$ and port $E$ at node $W_i$ corresponding to this edge.
\item 
Add all edges of the cycle $N_1$ --- $S_2$ ---$N_3$ --- ... --- $S_{x-1}$ --- $N_x$ --- $N_1$. For each edge $N_j$ --- $S_{j+1}$, put port $E$ at node $N_j$ and port $W$ at node $S_{j+1}$. For each edge $S_j$ --- $N_{j+1}$, put port $E$ at $S_j$ and port $W$ at $N_{j+1}$. For the edge 
$N_x$ --- $N_1$, put port $E$ at $N_x$ and $W$ at $N_1$.

Add all edges of the cycle $S_1$ --- $N_2$ ---$S_3$ --- ... --- $N_{x-1}$ --- $S_x$ --- $S_1$. For each edge $S_j$ --- $N_{j+1}$, put port $E$ at node $S_j$ and port $W$ at node $N_{j+1}$. For each edge $N_j$ --- $S_{j+1}$, put port $E$ at $N_j$ and port $W$ at $S_{j+1}$. For the edge 
$S_x$ --- $S_1$, put port $E$ at $S_x$ and $W$ at $S_1$.

Add all edges of the cycle $E_1$ --- $W_2$ ---$E_3$ --- ... --- $W_{x-1}$ --- $E_x$ --- $E_1$. For each edge $E_j$ --- $W_{j+1}$, put port $N$ at node $E_j$ and port $S$ at node $W_{j+1}$. For each edge $W_j$ --- $E_{j+1}$, put port $N$ at $W_j$ and port $S$ at $E_{j+1}$. For the edge 
$E_x$ --- $E_1$, put port $N$ at $E_x$ and $S$ at $E_1$.

Add all edges of the cycle $W_1$ --- $E_2$ ---$W_3$ --- ... --- $E_{x-1}$ --- $W_x$ --- $W_1$. For each edge $W_j$ --- $E_{j+1}$, put port $N$ at node $W_j$ and port $S$ at node $E_{j+1}$. For each edge $E_j$ --- $W_{j+1}$, put port $N$ at $E_j$ and port $S$ at $W_{j+1}$. For the edge 
$W_x$ --- $W_1$, put port $N$ at $W_x$ and $S$ at $W_1$.
\end{itemize}

By construction, all nodes of the graph $\hat{Q}_h$ have degree 4 and all of its edges have either ports $N-S$ or $E-W$ at their extremities. These ports could be numbered 0,1,2,3 instead of $N,E,S,W$,
but we prefer the letter notation that visually suggests cardinal directions. 

\begin{figure}[tp]
\centering
\includegraphics[scale=0.5]{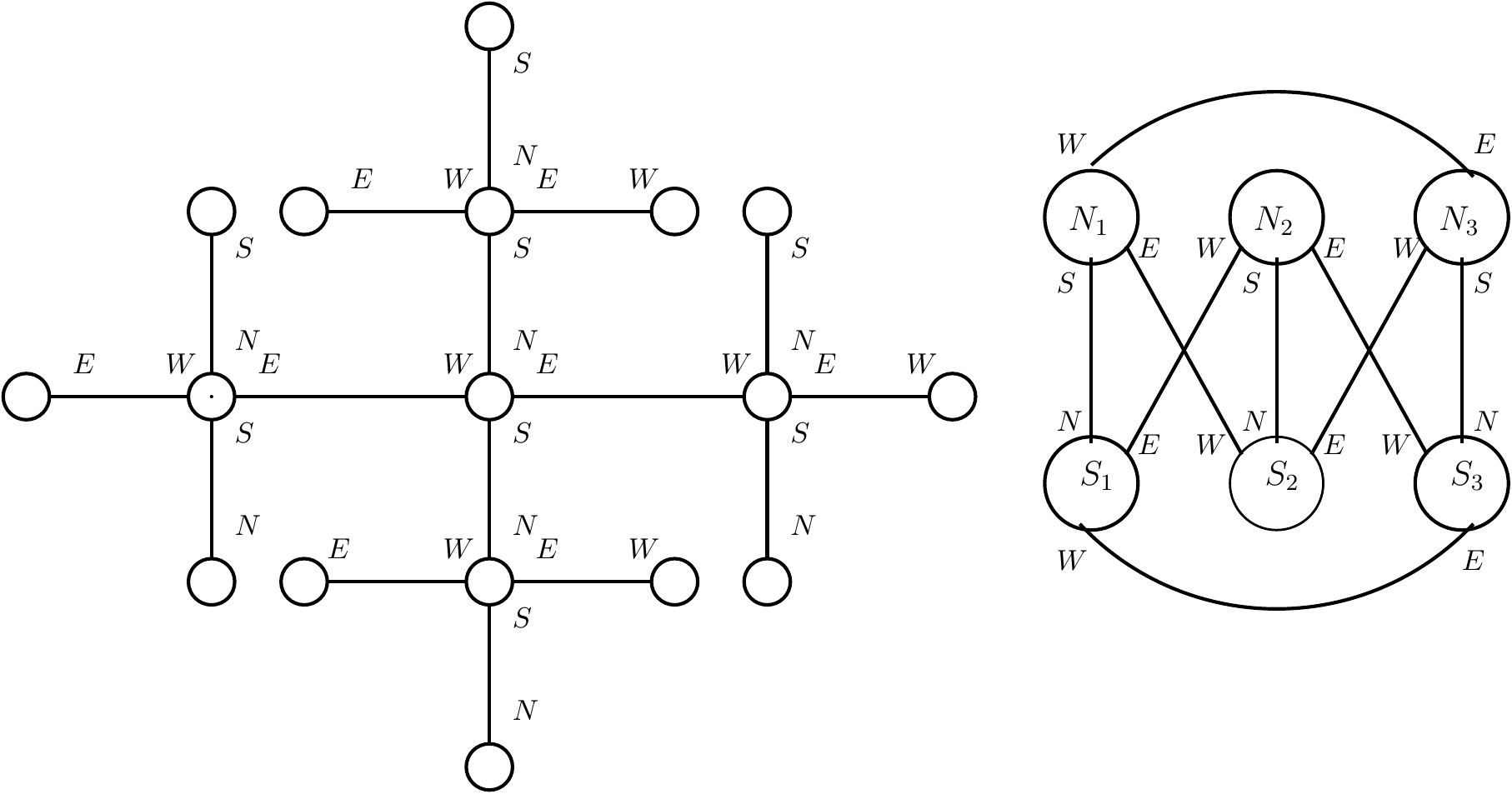}
\caption{The tree $Q_2$ (left) and additional edges between some of its leaves in the graph $\hat{Q_2}$ (right). }
\label{Q_h-tree}
\end{figure}

Fig. \ref{Q_h-tree}. shows the tree $Q_2$ and the additional edges between $N$ type leaves and $S$ type leaves in the 
corresponding graph $\hat{Q}_2$. Notice that the view of each node of $\hat{Q}_h$ is identical, and hence all pairs of nodes are symmetric. For convenience, we will say that the root $r$ of the tree  $Q_h$ is also the root of $\hat{Q}_h$.

Let $D$ be a positive even integer, $D=2k$. Let $h=2^D$ and consider the graph $\hat{Q}_h$.
We define the following set $Z$ of nodes. A node $v$ belongs to $Z$, if there exists a sequence $\gamma =(p_1,\dots, p_k)$ of port numbers, such that  $p_i\in \{N,E\}$ and $v=(\gamma ^{\frown}\gamma)(r)$, where $^{\frown}$ denotes the concatenation. Thus, for example, the node $(NEENEE)(r)$ is in $Z$. By definition, all nodes in $Z$ are at distance $D$ from $r$, and the size of $Z$ is $2^k$. 

The following theorem shows that in the graph $\hat{Q}_h$ the rendezvous time must be exponential in the distance between some initial positions $u$, $v$ of the agents and hence exponential in $Shrink(u,v)$.

\begin{theorem}\label{lb}
Any algorithm that achieves rendezvous for any STIC $[(r,v),D]$ in $\hat{Q}_h$, where $D=2k$, $h=2^D$ and $v \in Z$, must work in time at least $2^{k-1}$.
\end{theorem}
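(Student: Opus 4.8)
The plan is to exploit two features of $\hat{Q}_h$ that make the agents' behaviour rigid: every node has degree $4$, and every edge carries ports $N$-$S$ or $E$-$W$ at its two ends, so the port by which an agent \emph{enters} a node is always $\mathrm{ret}(p)$, where $p$ is the port it \emph{left} by and $\mathrm{ret}$ is the fixed involution $N\leftrightarrow S$, $E\leftrightarrow W$. Consequently the sequence of observations (node degree, entry port) collected by an agent walking in $\hat{Q}_h$ is completely determined by the sequence of ports it has used so far, independently of where it started. Since the agents are deterministic, execute the same algorithm, and consult only their private clocks, I would first establish that, up to the first meeting, the behaviour of an agent in $\hat{Q}_h$ is one fixed move sequence $\sigma=(\sigma_1,\sigma_2,\dots)$ — each $\sigma_t$ a port in $\{N,E,S,W\}$ or the ``stay'' move — the same for both agents and for every starting node. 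Writing $\sigma^{(t)}(x)$ for the node reached from $x$ after the first $t$ moves of $\sigma$, the agent that appears at $r$ in round $0$ then sits at $\sigma^{(t)}(r)$ in round $t$, while an agent that appears at $v$ in round $D$ sits at $\sigma^{(t-D)}(v)$ in round $t\ge D$, independently of $v$.

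The crux is to show that for each $t$ the map $x\mapsto\sigma^{(t)}(x)$ is injective on $V(\hat{Q}_h)$. Let $\pi=(\pi_1,\dots,\pi_m)$ be the subsequence of non-``stay'' ports among $\sigma_1,\dots,\sigma_t$ (this subsequence depends only on $\sigma$), so that $\sigma^{(t)}(x)$ is the endpoint of the walk following $\pi$ from $x$. The reverse of such a walk follows the port sequence $(\mathrm{ret}(\pi_m),\dots,\mathrm{ret}(\pi_1))$ — this is exactly where the $N$-$S$/$E$-$W$ labelling is used — and this reverse sequence does not depend on the walk. Since a walk that follows a prescribed port sequence from a given vertex is unique, two $\pi$-walks ending at the same node must start at the same node, which is the desired injectivity. (Only the port consistency of $\hat{Q}_h$ matters here, not that $Q_h$ is a tree.)

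The bound then follows by pigeonhole. Fix any algorithm that achieves rendezvous for all STICs $[(r,v),D]$ with $v\in Z$. For each $v\in Z$ take the STIC in which the agent at $r$ is the earlier one, appearing in round $0$, and the agent at $v$ appears in round $D$. No meeting can occur before round $D$, so if $t_v$ denotes the first meeting round, the rendezvous time is $s_v:=t_v-D\ge 0$, and equating the agents' positions in round $t_v$ yields $\sigma^{(s_v)}(v)=\sigma^{(s_v+D)}(r)$. If $s_v=s_{v'}$ for distinct $v,v'\in Z$ then $\sigma^{(s_v)}(v)=\sigma^{(s_v+D)}(r)=\sigma^{(s_v)}(v')$, contradicting injectivity of $\sigma^{(s_v)}$. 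Hence $\{\,s_v : v\in Z\,\}$ is a set of $|Z|=2^{k}$ distinct non-negative integers, so some $v\in Z$ has $s_v\ge 2^{k}-1\ge 2^{k-1}$; for that STIC the algorithm spends at least $2^{k-1}$ rounds after the later agent appears. (The argument in fact gives the stronger bound $2^{k}-1$, and it uses nothing about $Z$ beyond the fact that it consists of $2^{k}$ pairwise symmetric nodes; the ``doubled'' form $(\gamma^{\frown}\gamma)(r)$ is presumably imposed in order to pin down $Shrink(r,v)$ for the surrounding discussion, not for this theorem.)

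The one genuinely delicate step is the first one: verifying rigorously that, before the first meeting, an agent's trajectory in $\hat{Q}_h$ is a fixed move sequence, independent of its starting node and of whether it is the earlier or the later agent. This is precisely what the uniform degree and the $N$-$S$/$E$-$W$ port labelling of $\hat{Q}_h$ are designed to guarantee; after it, the proof is just the reverse-walk injectivity lemma and the counting argument above.
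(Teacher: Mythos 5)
Your proof is correct, and it takes a genuinely different route from the paper's. Both arguments start from the same reduction, which the paper asserts in essentially the same words you use: since $\hat{Q}_h$ is $4$-regular and every edge carries complementary ports $N$--$S$ or $E$--$W$, an agent's observations are determined by its own moves, so any algorithm degenerates into one fixed oblivious move sequence. From there the paper restricts attention to the tree $Q_h$ (this is the only role of the choice $h=2^D$: in time below $2^{k-1}$ the agents cannot reach the added leaf edges) and runs a coverage argument via the median property of trees: for each $v\in Z$ at least one of the two agents must pass through the midpoint $M(v)$ of the simple $r$--$v$ path on its way to the meeting node, for at least half of the $2^k$ targets it is the same agent, and since those $2^{k-1}$ midpoints are distinct nodes that agent's walk needs at least $2^{k-1}$ steps. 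You instead prove that the map $x\mapsto\sigma^{(t)}(x)$ is injective --- invertible by the reversed-port walk, which is where the $N$--$S$/$E$--$W$ consistency enters --- and conclude by pigeonhole that the meeting times for distinct $v\in Z$ are pairwise distinct, which gives the slightly stronger bound $2^k-1$. Your version is shorter and more general: it uses nothing about the tree structure, the magnitude of $h$, or the doubled form $(\gamma^{\frown}\gamma)(r)$ of the nodes of $Z$, and would apply verbatim to any $2^k$ starting positions in any $4$-regular graph with this consistent edge labelling (e.g., an oriented torus). What the paper's argument buys in exchange is a concrete structural fact --- the earlier agent must physically visit exponentially many prescribed nodes --- rather than only a dispersion of meeting times. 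The step you flag as delicate (obliviousness of the algorithm before the first meeting) is exactly the step the paper also takes on faith, so there is no gap relative to the paper's own standard of rigor.
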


\begin{proof}
Consider a hypothetical algorithm $A$ that achieves rendezvous for any STIC $[(r,v),D]$, where $v \in Z$, in time shorter than $2^{k-1}$.
Since the graph $\hat{Q}_h$ is regular, its nodes do not have labels and all edges have either ports $N-S$ or $E-W$ at their extremities, the agent cannot get any knowledge during the navigation in the graph. Hence any algorithm can be simply coded as a sequence of symbols from the set $\{\epsilon,N,E,S,W\}$ where $\epsilon$ means that the agent stays put in a given step, and $N,E,S,W$ mean that the agent takes the respective port in a given step.  (In other words, there are no conditional statements in such an algorithm).  Since $h=2^D$,  the paths traversed by the agents cannot contain edges added between leaves of $Q_h$ in the construction of $\hat{Q}_h$ because time is shorter than the distance from the initial positions of the agents to the leaves of $Q_h$. Hence, in the rest of the proof we can assume that we work in the tree $Q_h$.

For any path $\pi$ with extremities $a$ and $b$, the {\em reduct} $Rd(\pi)$ is defined as the (unique) simple path from $a$ to $b$. For any $v \in Z$, define $M(v)$ to be $\gamma(r)$, where 
$v=(\gamma ^{\frown}\gamma)(r)$, and define $meet(v)$ to be the node at which a pair of agents starting from $r$ and from $v$ and executing algorithm $A$  accomplishes rendezvous. For any $v\in Z$, consider the paths $\lambda_1(v)$ and $\lambda_2(v)$
to the node $meet(v)$, from $r$ and $v$, respectively, which the agents starting at $r$ and $v$ follow in the execution of algorithm~$A$.

For any node $v\in Z$ there are two possibilities.
Either the reduct $Rd(\lambda_1(v))$ contains the simple path $\pi(v)$ with extremities $r$ and $M(v)$ as a prefix,  or the reduct $Rd(\lambda_2(v))$ contains the simple path $\pi'(v)$ with extremities $v$ and $M(v)$ as a prefix. Hence, for at least one half of all nodes
$v$ in $Z$,  one of these cases must occur. Without loss of generality, suppose that it is the first case. Hence, for at least 
 $y=2^{k-1}$ nodes $v$ in $Z$, call them $v_1,\dots ,v_y$, the agent starting at node $r$ must get to the nodes
 $M(v_1),\dots,M(v_y)$, respectively, in order to meet the other agent at nodes $meet(v_1), \dots ,meet(v_y)$, respectively.
 Let $\rho_i$ be the sequence of ports with terms from $\{N,E\}$, corresponding to the unique simple path from $r$ to $M(v_i)$, for $i=1,\dots ,y$.
 Thus, the initial segment $I$ of  the sequence of ports representing the algorithm $A$, corresponding to the execution of the algorithm that guarantees rendezvous for all initial pairs $(r,v)$, where $v\in Z$,  must contain as subsequences all sequences $\rho_1,\dots ,\rho_y$. Hence the initial segment $I$ must be of length at least  $y=2^{k-1}$, which proves that algorithm $A$ requires time at least $2^{k-1}$.
 This contradiction completes the proof.
\end{proof}

We now turn attention to the complexity of Algorithm $UniversalRV$. Suppose that for some feasible STIC $[(u,v),\delta]$ in a graph of size $n$, the rendezvous is achieved. We want to estimate the time used by Algorithm $UniversalRV$ to accomplish this task.

\begin{proposition}
The time used by Algorithm $UniversalRV$ to accomplish rendezvous for a STIC  $[(u,v),\delta]$ in a graph of size $n$ is 
$O(n+\delta)^{O(n+\delta)}$.
\end{proposition}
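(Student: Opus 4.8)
The plan is to bound the total running time by summing, over all phases that are executed up to and including the phase in which rendezvous is already guaranteed by (the proof of) Theorem~\ref{main}, the number of rounds spent in that phase. First I would pin down the terminal phase. By the argument in the proof of Theorem~\ref{main}, rendezvous is accomplished by the end of some phase $P^\ast$ with $g^{-1}(P^\ast)=(n,d^\ast,\delta)$, where $d^\ast=Shrink(u,v)$ in the symmetric case and $d^\ast$ is some value in $\{1,\dots,n-1\}$ in the nonsymmetric case; in either case $d^\ast<n$, so that phase is genuinely executed (the test ``$d<n$'' passes). Since $f(x,y)=x+(x+y-1)(x+y-2)/2$ and $d^\ast<n$, we get $f(n,d^\ast)=O(n^2)$, hence $P^\ast=f(f(n,d^\ast),\delta)=O\big((n^2+\delta)^2\big)=O\big((n+\delta)^4\big)$; in particular $P^\ast$ is polynomial in $n+\delta$.

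Next I would bound the number of rounds spent in a single phase $P\le P^\ast$. Write $(n',d',\delta')=g^{-1}(P)$. The key point is that $n'$, $d'$, $\delta'$ are \emph{individually} small: from $g(n',d',\delta')=P\le P^\ast$ together with the inequality $f(a,b)\ge (a+b-1)(a+b-2)/2$, applied once to the outer $f$ and once to the inner $f$, one obtains $\delta'=O(\sqrt{P^\ast})=O((n+\delta)^2)$ and $n'+d'=O\big((P^\ast)^{1/4}\big)=O(n+\delta)$. By the structure of Algorithm~\ref{universal}, phase $P$ consists of an $AsymmRV$ block lasting exactly $2(P(n')+\delta')$ rounds (here $P(\cdot)$ is the polynomial of Proposition~\ref{lemma-asymm}), which is polynomial in $n+\delta$, followed, when $\delta'\ge d'$, by a $SymmRV$ block lasting exactly $T(n',d',\delta')$ rounds. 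By Lemma~\ref{time-symm}, $T(n',d',\delta')=[(d'+\delta')(n'-1)^{d'}](M'+2)+2(M'+1)$ with $M'$ polynomial in $n'$; every factor here is polynomial in $n+\delta$ except $(n'-1)^{d'}\le (n')^{n'}\le(O(n+\delta))^{O(n+\delta)}$, using $n',d'=O(n+\delta)$. Hence each phase $P\le P^\ast$ takes at most $(O(n+\delta))^{O(n+\delta)}$ rounds.

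Finally I would sum: the total time is $\sum_{P=1}^{P^\ast}(\text{rounds in phase }P)\le P^\ast\cdot(O(n+\delta))^{O(n+\delta)}$, and since $P^\ast$ is polynomial in $n+\delta$ this is still $(O(n+\delta))^{O(n+\delta)}=O(n+\delta)^{O(n+\delta)}$, as claimed. I would also record the small bookkeeping fact that every phase consumes the same number of rounds for each of the two agents (guaranteed by the explicit ``wait until'' instructions in Algorithm~\ref{universal}), so that ``by the end of phase $P^\ast$'' is well defined and the per-phase counts genuinely add up, measured from the later agent's start.

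The step I expect to be the crux is the bound $n',d'=O(n+\delta)$ in every phase up to $P^\ast$. A naive estimate $n',d',\delta'\le P^\ast$ would give only $(n+\delta)^{O((n+\delta)^4)}$, which is far too weak; the saving comes precisely from the fact that $g$ is, up to lower-order terms, a degree-$4$ polynomial in its three arguments (being a composition of two quadratic pairing functions), so inverting it costs a fourth root, which exactly cancels the $O((n+\delta)^4)$ growth of $P^\ast$ in $n+\delta$ and brings the relevant parameters of each executed phase back down to $O(n+\delta)$.
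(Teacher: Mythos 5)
Your proof is correct and follows essentially the same route as the paper's: bound the index of the terminal phase by $O((n+\delta)^4)$, bound each executed phase $P'\le P^\ast$ by a polynomial $AsymmRV$ block plus a $SymmRV$ block of order $(n')^{d'}$ with $n',d'=O(n+\delta)$, and multiply by the polynomial number of phases. Your explicit inversion of the two quadratic pairing functions to get $n'+d'=O((P^\ast)^{1/4})=O(n+\delta)$ and $\delta'=O(\sqrt{P^\ast})$ is in fact more careful than the paper's one-line assertion that the phase parameters are bounded by $n+d+\delta$, but it is the same argument in substance.
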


\begin{proof}
Consider a STIC $[(u,v),\delta]$ in a graph of size $n$, and let $d=Shrink(u,v)$, in the case when $u$ and $v$ are symmetric. By the design of 
Algorithm $UniversalRV$, rendezvous is accomplished at the latest in phase $P=g(n,d,\delta)$.


Consider any phase $P'\leq P$, and its corresponding triple $(n',d',\delta') = g^{-1}(P')$. The parameters $n'$, $d'$ and $\delta'$ of phase $P'$ are bounded by  $n+d+\delta$. Since $d<n$, we have $n', d',\delta' \in O(n+\delta)$ (and each of these parameters is actually $\Theta(n+\delta)$, for some phase $P'<P$) . During the execution of phase $P'$ of Algorithm $UniversalRV$, which happens if $d'<n'$, Procedure $AsymmRV(n')$ is executed first and takes time $2(P(n')+\delta')$, hence it uses $A(n,\delta)$ rounds, where $A(n,\delta)$ is some polynomial in $n$ and $\delta$ .

Next, if $d'<n'$ and $d' \leq \delta'$, then Procedure $SymmRV(n',d',\delta')$ is executed. By Lemma \ref{time-symm}, Procedure $SymmRV(n',d',\delta')$ takes time $W(n',\delta')\cdot {(n')}^{d'}$, where $W(n',\delta')$ is some polynomial in $n'$ and $\delta'$. Since
$n', d',\delta' \in O(n+\delta)$, Procedure $SymmRV(n',d',\delta')$ takes time ${O(n+\delta)}^{O(n+\delta)}$. Hence, each phase $P'\leq P$ of Algorithm $UniversalRV$ takes time $O(n+\delta)^{O(n+\delta)}$. 

Now, we count the number of phases $P' \leq P$ of the algorithm $UniversalRV$ that are executed before rendezvous for the
STIC $[(u,v),\delta]$. By the definition of function $f: \mathcal{N}\times \mathcal{N} \longrightarrow \mathcal{N}$, we have
$f(n,d)\in O(n^2+d^2)$. By the definition of function  $g: \mathcal{N}\times \mathcal{N}\times \mathcal{N} \longrightarrow \mathcal{N}$, we have
$g(n,d,\delta) \in O((n^2+d^2)^2+\delta^2)=O(n^4+d^4+\delta^2) \subseteq O(n^4+\delta^2)$. Hence the number of phases of the algorithm $UniversalRV$ to be executed before rendezvous is $O(n^4+\delta^2)$.  
Thus, the total time used by Algorithm $UniversalRV$ to accomplish rendezvous for the
STIC $[(u,v),\delta]$ is $O(n+\delta)^{O(n+\delta)}$.
\end{proof}

Since Algorithm $UniversalRV$  does not know the parameters $n,d,\delta$ and works in phases, some phases $P'$ preceding the final phase $P$ have the corresponding $d'$ of size $\Theta(n+\delta)$, which forces complexity exponential in $\Theta(n+\delta)$, even if the actual $d$ is much smaller than $n+\delta$. Although we know, by Theorem \ref{lb}, that the complexity of rendezvous has to be exponential in $d$, it is not clear that it has to be exponential in  $\Theta(n+\delta)$. Notice that our lower bound in Theorem \ref{lb} was obtained for a graph in which $n$ was exponential in $d$. Hence it is not precluded that there could exist a universal algorithm with complexity polynomial in $n$ and $\delta$. In fact, a simplified algorithm working only for STICs  $[(u,v),\delta]$ with asymmetric nodes $u,v$, which can be obtained from Algorithm $UniversalRV$
by deleting the Procedure $SymmRV$ in each phase, would indeed be polynomial in $n$ and $\delta$. (This simplified algorithm for asymmetric starting positions would be the version of rendezvous from \cite{CKP} without any knowledge of the size of the graph). This yields the main open problem left by our work.
\begin{quotation}
Does there exist a universal deterministic algorithm which guarantees rendezvous for all feasible STICs in time polynomial in the size of the graph and in the delay between agents?
\end{quotation}

\section{Conclusion}
We characterized all space-time initial configurations for which anonymous deterministic rendezvous is possible, and we designed a universal algorithm that accomplishes rendezvous for all of them without any initial knowledge. Our algorithm works in time exponential in $\Theta(n+\delta)$, where  $n$ is the size of the graph and $\delta$ is the delay between starting times of the agents. While we showed that,
for some initial positions in some graphs, the time must be exponential in the initial distance between the agents, it is not clear if it could not be polynomial in $n$ and $\delta$. 

It should be mentioned that in the asynchronous version of our problem, time cannot be used to break symmetry, as the speed of the agents and the delay between them is controlled by the adversary. Hence in the asynchronous scenario, only space can be used to break symmetry between anonymous agents, and this was completely solved in \cite{GP}. On the other hand, the synchronous randomized counterpart of our problem is straightforward, and follows from the fact that two random walks meet with high probability in time polynomial in the size of the graph \cite{MU}.


\end{document}